\newcommand{\Vast}{\bBigg@{5}}
\newcommand{\vast}{\bBigg@{3}}
\newcommand{\dB}{\,\mathrm{dB}}
\newcommand{\nm}{\,\mathrm{nm}}
\newcommand{\E}{\,\mathrm{E}}
\newcommand{\var}{\,\mathrm{Var}}
\newcommand{\argmax}{\,\mathrm{argmax}}
\newcommand{\matr}[1]{\mathbf{#1}}
\theoremstyle{definition}
\newtheorem{theo}{\textbf{Theorem}}
\newtheorem{lem}{\textbf{Lemma}}
\newtheorem{cor}{\textbf{Corollary}}
\newtheorem{rem}{\textit{Remark}}
\providecommand\add@text{}
\newcommand\tagaddtext[1]{%
  \gdef\add@text{#1\gdef\add@text{}}}%
\renewcommand\tagform@[1]{%
  \maketag@@@{\llap{\add@text\quad}(\ignorespaces#1\unskip\@@italiccorr)}%
}
\begin{document}

\title{\LARGE{Jamming Suppression in Massive MIMO Systems}}

\author{\normalsize{Hossein Akhlaghpasand, Emil Bj\"{o}rnson, and S. Mohammad Razavizadeh}
\thanks{\IEEEauthorblockA{ H. Akhlaghpasand and S. M. Razavizadeh are with Iran University of Science and Technology, Tehran, Iran (e-mail: h{\_}akhlaghpasand@elec.iust.ac.ir, smrazavi@iust.ac.ir). E. Bj\"{o}rnson is with Link\"{o}ping University, Link\"{o}ping, Sweden (e-mail: emil.bjornson@liu.se).}}}

\maketitle

\begin{abstract}
In this paper, we propose a framework for protecting the uplink transmission of a massive multiple-input multiple-output (mMIMO) system from a jamming attack. Our framework includes a novel minimum mean-squared error based jamming suppression (MMSE-JS) estimator for channel training and a linear zero-forcing jamming suppression (ZFJS) detector for uplink combining. The MMSE-JS exploits some intentionally unused pilots to reduce the pilot contamination caused by the jammer. The ZFJS suppresses the jamming interference during the detection of the legitimate users' data symbols. The implementation of the proposed framework is practical, since the complexities of computing the MMSE-JS and the ZFJS are linear (not exponential) with respect to the number of antennas at the base station and linear detectors with the same complexities as the ZFJS have been already fabricated using $28~\nm$ FD-SOI (Fully Depleted Silicon On Insulator) technology in \cite{MaMIMOImpLarsson} and Xilinx Virtex-7 XC7VX690T FPGA in \cite{MaMIMOImpZeng} for the mMIMO systems. Our analysis shows that the jammer cannot dramatically affect the performance of a mMIMO system equipped with the combination of MMSE-JS and ZFJS. Numerical results confirm our analysis.
\end{abstract}

\begin{IEEEkeywords}
Massive MIMO, jamming suppression.
\end{IEEEkeywords}

\section{Introduction}
Massive multiple-input multiple-output (mMIMO) is the key technology to increase the spectral efficiency (SE) in future wireless networks \cite{BookMarzetta}, \cite{MassiveMIMOPaper}, by spatial multiplexing of many users. Robustness against jamming attacks is an important requirement that future networks must fulfill. Jamming detection is the first step to improve security \cite{JamDetHoss}, while jamming suppression is the next step \cite{AntiJamTransVec}$-$\cite{JamResTai}. In \cite{AntiJamTransVec}, two anti-jamming schemes are proposed for an interference alignment (IA)-based wireless network with perfect channel state information and cooperation between the transmitters and receivers, while in \cite{HarvestJamTransWireless}, exploiting jamming signals for energy harvesting is also considered in such a network to take benefit from the jammer. In \cite{JamMitDet}, a subspace-based approximate minimum mean-squared error (MMSE) detector is used to filter out uplink jamming, but it breaks down when the jammer uses the same power as the legitimate users. Since mMIMO is sensitive to jamming pilot contamination, in \cite{SecTransWang}, the downlink transmission is secured by assigning multiple orthogonal pilots to each user and randomly selecting which to transmit. Do \emph{et al.} \cite{JamResTai} proposed to estimate the jammer channel and use it for zero-forcing (ZF)-like uplink detection. The jamming channel is estimated by exploiting unused pilots and pilot hopping, to ``force'' the jammer to transmit these unused pilots to cause pilot contamination. We will improve on this prior work by supporting multiple legitimate users and using the unused pilots in a more efficient way to achieve higher SE.

The severe threats imposed by jamming attacks during the uplink transmission of the mMIMO systems motivated us to study the jamming suppression problem. The main contributions of this paper can be summarized as follows:
\begin{itemize}
\item We suggest a new uplink framework based on a zero-forcing jamming suppression (ZFJS) detector, which uses the channels of both the legitimate users and jammer to suppress jamming.
\item We propose an MMSE based jamming suppression (MMSE-JS) estimator that exploits unused pilots to provide the estimated channels of both the legitimate users and jammer for implementation of ZFJS.
\item We exploit asymptotic properties of mMIMO to acquire the channel statistics and utilize pilot hopping to reduce the jamming pilot contamination in order to get accurate channel estimations.
\end{itemize}
Our analytical and numerical analyses show that using the proposed framework, the jammer cannot dramatically affect SE of the system.

\textit{Notations:} The conjugate, transpose, and conjugate transpose of an arbitrary matrix $\matr{X}$ are respectively denoted by $\matr{X}^*$, $\matr{X}^T$, and $\matr{X}^H$, while $\matr{I}_N$ is the $N \times N$ identity matrix. We denote the Euclidean norm of an arbitrary vector $\matr{x}$ by $\|\matr{x} \|$. The magnitude and angle of a complex value $x$ are denoted by $|x|$ and $\angle x$, respectively. $\text{E} \{\matr{x} \}$ is the expectation of a stochastic vector $\matr{x}$. Finally, $\var \{x\} \triangleq \text{E}\{|x-\text{E}\{x\}|^2\}$.

\section{Problem Setup}
We consider the uplink of a single-cell mMIMO system, depicted in Fig. \ref{fig1}, consisting of one $M$-antenna base station (BS) and $K$ single-antenna legitimate users in the presence of a jammer. For brevity, we consider the single-cell scenario, which is similar to a practical multi-cell scenario with a pilot reuse factor $3$, $4$, or $7$. We assume a single-antenna jammer in order to analyze the main principle of jamming suppression within the limited space. We denote by $\matr{h}_{k} \in \mathbb{C}^{M \times 1}$ and $\matr{h}_w \in \mathbb{C}^{M \times 1}$ the channel vectors from the $k$th user and the jammer to the BS, respectively, which are modeled by $\matr{h}_{k} \sim \mathcal{CN} \left(\matr{0} , \beta_{k} \matr{I}_M \right)$ and $\matr{h}_w \sim \mathcal{CN} \left(\matr{0} , \beta_w \matr{I}_M \right)$. The coefficients $\beta_{k}$ and $\beta_w$ represent the large-scale fading. We study a block-fading model where the channels are constant within a coherence block of $T$ samples and have independent realizations between the blocks.

The BS needs to estimate the channel vectors of the users under interference/noise in each coherence block. Hence, the users transmit their mutually orthonormal pilots at $\tau$ ($\tau \leq T$) samples of a coherence block. The pilots are selected from an orthonormal pilot set and assigned to the users by the BS. We denote the pilot of the $k$th user by $\bm{\phi}_k^{\left(\text{U}\right)} \in \mathbb{C}^{\tau \times 1}$ and the orthonormal pilot set by $\left \{\bm{\phi}_1, \ldots, \bm{\phi}_{\tau} \right \}$, where $\bm{\phi}_i \in \mathbb{C}^{\tau \times 1}$ is the $i$th pilot. During the remaining $T-\tau$ samples, the users transmit data symbols to the BS. The transmit powers of the users are denoted by $p_t$ and $p_d$ in the pilot and data phases, respectively. The jammer transmits intentional interference in the pilot and data phases by different transmit powers $q_t$ and $q_d$, respectively, in order to reduce the performance of the legitimate system.

\begin{figure}
\centering
\includegraphics[width=0.4\columnwidth]{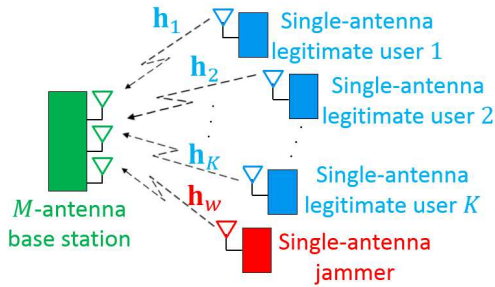}
\caption{Uplink transmission of a mMIMO system under jamming attack.}
\label{fig1}
\end{figure}

\subsection{Pilot Phase}
During $\tau$ samples of a coherence block, the users transmit their allocated pilots to the BS. The received signal $\matr{Y}_t \in \mathbb{C}^{M \times \tau}$ at the BS is given by
\begin{equation}\label{received_matrix}
\matr{Y}_t = \sqrt{\tau p_t} \sum_{i=1}^{K}\matr{h}_{i} \bm{\phi}^{\left(\text{U}\right)^T}_{i} + \sqrt{\tau q_t} \matr{h}_w \bm{\psi}^T_{w} + \matr{N}_t ,
\end{equation}
where $\bm{\psi}_w \in \mathbb{C}^{\tau \times 1}$ is a normalized (unit power) sequence which is transmitted by the jammer and $\matr{N}_t \in \mathbb{C}^{M \times \tau}$ is the normalized receiver noise composed of i.i.d $\mathcal{CN}(0,1)$ elements. Since $\bm{\psi}_w$ might be a random vector, the second term in the right-hand side of \eqref{received_matrix} is not necessarily Gaussian. Hence, we use the linear MMSE (LMMSE) estimate of ${\matr{h}}_k$ based on $\matr{Y}_t$ which takes the following form \cite[Ch. 12]{BookEstKay}:
\begin{equation}\label{Conv_MMSE}
\hat{\matr{h}}_k = \frac{\sqrt{\tau p_t} \beta_k}{1+\tau q_t \E \left\{\left|\alpha_k^{\left(\text{U}\right)} \right|^2 \right\} \beta_w+\tau p_t \beta_k} \matr{y}^{\left(\text{U}\right)}_{k} ,
\end{equation}
where $\alpha_k^{\left(\text{U}\right)} \triangleq \bm{\psi}^T_{w} \bm{\phi}^{\left(\text{U}\right)^*}_{k}$ and $\matr{y}^{\left(\text{U}\right)}_{k} \triangleq \matr{Y}_t \bm{\phi}^{\left(\text{U}\right)^*}_{k}$. To implement the LMMSE, we need $M\tau+7$ complex multiplications.

\subsection{Data Phase}
During data transmission, the normalized complex symbols $x_i$ and $x_w$ are simultaneously transmitted by the $i$th user and jammer. The received vector $\matr{y}_d \in \mathbb{C}^{M \times 1}$ at the BS is
\begin{equation}\label{rec_sig_data_phase}
\matr{y}_d = \sqrt{p_d} \sum_{i=1}^{K} \matr{h}_{i} x_{i} + \sqrt{q_d} \matr{h}_w x_w + \matr{n}_d ,
\end{equation}
where $\matr{n}_d \sim \mathcal{CN}\left(\matr{0}, \matr{I}_M\right)$ is the normalized receiver noise. The BS filters the data signal $\matr{y}_d$ by using a combining matrix $\matr{V} \in \mathbb{C}^{M \times K}$, i.e., the receive combining matrix based on the channel estimates, to obtain $\matr{V}^H \matr{y}_d$. 
\begin{lem} \label{Lemma1}
(Asymptotic SE): In a mMIMO system utilizing the LMMSE estimate in \eqref{Conv_MMSE} and a standard linear detector $\matr{V}$ (e.g., matched filter (MF) or ZF), if the probability of $\alpha_k^{\left(\text{U}\right)}=0$ is non-zero, then the achievable SE of the $k$th user goes to infinity as $M \rightarrow \infty$; otherwise if the probability of $\alpha_k^{\left(\text{U}\right)}=0$ is zero, then we get the asymptotic SE of the $k$th user as
\begin{equation}
\label{asymp_SE}
\mathcal{S}_k^{\left(\text{asy}\right)} = \left(1-\frac{\tau}{T}\right) \E \left\{\log_2 \left(1+\frac{p_t p_d \beta_k^2}{q_t q_d \left|\alpha_k^{\left(\text{U}\right)} \right|^2 \beta_w^2} \right) \right\} , M \rightarrow \infty .
\end{equation}
\end{lem}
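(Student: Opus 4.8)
The plan is to reduce everything to the large-$M$ behavior of a single effective channel direction and to track which interference terms survive the limit. First I would project the pilot block onto user $k$'s pilot: using the orthonormality $\bm{\phi}_i^{\left(\text{U}\right)^T}\bm{\phi}_k^{\left(\text{U}\right)^*}=\delta_{ik}$ in \eqref{received_matrix}, the processed observation becomes $\matr{y}_k^{\left(\text{U}\right)}=\sqrt{\tau p_t}\,\matr{h}_k+\sqrt{\tau q_t}\,\alpha_k^{\left(\text{U}\right)}\matr{h}_w+\matr{n}_k$, where $\matr{n}_k\triangleq\matr{N}_t\bm{\phi}_k^{\left(\text{U}\right)^*}\sim\mathcal{CN}(\matr{0},\matr{I}_M)$ is independent of the channels. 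Hence the estimate $\hat{\matr{h}}_k=c_k\matr{y}_k^{\left(\text{U}\right)}$, with $c_k$ the coefficient in \eqref{Conv_MMSE}, is a fixed linear combination of $\matr{h}_k$, $\matr{h}_w$, and noise; the decisive feature is that the jamming channel $\matr{h}_w$ enters with weight $\sqrt{\tau q_t}\,\alpha_k^{\left(\text{U}\right)}$ whenever $\alpha_k^{\left(\text{U}\right)}\neq0$. Throughout I would condition on $\bm{\psi}_w$, i.e. treat $\alpha_k^{\left(\text{U}\right)}$ as a fixed scalar, and take the expectation over it only at the very end.

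Next I would invoke favorable propagation to evaluate the instantaneous SINR of $\matr{v}_k^H\matr{y}_d$ as $M\to\infty$. By the law of large numbers, $\tfrac1M\matr{h}_i^H\matr{h}_j\to\beta_i\delta_{ij}$, $\tfrac1M\matr{h}_i^H\matr{h}_w\to0$, and every noise cross-term tends to $0$ almost surely. For the matched filter $\matr{v}_k=\hat{\matr{h}}_k$ this yields $\tfrac1M\matr{v}_k^H\matr{y}_d\to c_k\big(\sqrt{\tau p_t p_d}\,\beta_k\,x_k+\sqrt{\tau q_t q_d}\,(\alpha_k^{\left(\text{U}\right)})^*\beta_w\,x_w\big)$, so that the desired symbol and a residual jamming term survive while the inter-user interference and the receiver noise vanish relative to the signal. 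Because the SINR is invariant to the common scale $c_k$, the surviving ratio is exactly $p_t p_d\beta_k^2/\big(q_t q_d|\alpha_k^{\left(\text{U}\right)}|^2\beta_w^2\big)$, which is the argument of the logarithm in \eqref{asymp_SE}; channel hardening moreover makes this instantaneous SINR concentrate, so the ergodic expectation over the channels collapses to a deterministic function of $\alpha_k^{\left(\text{U}\right)}$.

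For the zero-forcing detector I would attempt to show that the same jamming-limited ratio governs the limit, the intuition being that the contamination sits in a direction that no combiner built from $\{\hat{\matr{h}}_i\}$ can null: each $\hat{\matr{h}}_i$ carries a copy of $\matr{h}_w$, so the leakage $\matr{v}_k^H\matr{h}_w$ cannot be driven to zero, whereas the ZF combiner's norm $\|\matr{v}_k\|^2$ collapses and kills the noise. Concretely I would compute the limit of $\tfrac1M\hat{\matr{H}}^H\hat{\matr{H}}$ (a diagonal-plus-rank-one matrix whose rank-one part is the shared jamming component), together with the limits of $\hat{\matr{H}}^H\matr{h}_i$ and $\hat{\matr{H}}^H\matr{h}_w$, and read off the combined signal. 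This is the step I expect to be the main obstacle: unlike the matched filter, the ZF combiner is a nonlinear function of all estimates, and the off-diagonal entries of the limiting Gram matrix couple the users, so showing that the residual multiuser interference does not spoil the clean jamming-limited ratio requires delicate bookkeeping.

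Finally I would dispose of the two cases. If $\Pr(\alpha_k^{\left(\text{U}\right)}=0)>0$, then on that event the jamming weight in $\hat{\matr{h}}_k$ is zero, the leakage $\matr{v}_k^H\matr{h}_w$ vanishes, and the limiting SINR is infinite; hence $\log_2(1+\mathrm{SINR})=\infty$ on a set of positive probability and the expectation, and with it the SE, diverges. Otherwise $\alpha_k^{\left(\text{U}\right)}\neq0$ almost surely, the conditional SINR converges to the finite value above for almost every realization, and I would pass the limit through the continuous map $\log_2(1+\cdot)$ and interchange it with the expectation over $\bm{\psi}_w$ to recover $\mathcal{S}_k^{\left(\text{asy}\right)}$ in \eqref{asymp_SE}. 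A secondary technical point is the uniform integrability (dominated or monotone convergence) needed to justify this interchange when $|\alpha_k^{\left(\text{U}\right)}|$ can be arbitrarily small.
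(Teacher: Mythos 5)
Your route is genuinely different from the paper's. The paper never touches instantaneous (almost-sure) quantities: it defines the achievable SE through the use-and-then-forget bound of \cite[Sec. 2.3.5]{BookMarzetta}, writes the effective SINR \eqref{eff_sjnr_kth_user} in terms of expectations conditioned on $\alpha_k^{\left(\text{U}\right)}$, evaluates those four moments in closed form for the MF detector $\matr{v}_k=\hat{\matr{h}}_k$ in \eqref{Nominator}--\eqref{rec_noise}, and then observes that only the desired-signal and jamming terms carry an $M^2$ scaling while everything else scales as $M$; the ratio of the two $M^2$ coefficients is exactly \eqref{asymp_sinr}. Your proposal instead runs a favorable-propagation/law-of-large-numbers argument on $\tfrac1M\matr{v}_k^H\matr{y}_d$ and then appeals to hardening and a limit--expectation interchange. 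For the MF detector and for the two-case dichotomy (divergence when $\Pr(\alpha_k^{\left(\text{U}\right)}=0)>0$, the finite limit otherwise) your argument is sound and lands on the same ratio; what the paper's moment-based route buys is that no almost-sure convergence over the channel realizations and no uniform-integrability argument is needed, since the channel randomness is already averaged out inside the SINR definition before any limit is taken.

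The genuine gap is the ZF case. The lemma explicitly covers standard linear detectors including ZF, and you yourself flag the ZF bookkeeping --- the limiting Gram matrix $\tfrac1M\hat{\matr{H}}^H\hat{\matr{H}}$ with its shared rank-one jamming component, and the residual multiuser coupling --- as ``the main obstacle,'' without carrying it out. So as written the proposal proves the lemma only for MF. The paper does not do this bookkeeping either: it disposes of ZF (and MMSE) in one line by citing \cite[Proposition 5]{NgoSpec}, which establishes that the same asymptotic SINR limit is attained by those detectors under pilot-contaminated estimates. Your structural intuition --- every $\hat{\matr{h}}_i$ contains a copy of $\matr{h}_w$, so a combiner confined to the span of the estimates can zero-force the other users' estimates but cannot null the true jamming direction, while $\|\matr{v}_k\|^2$ kills the noise --- is exactly the right one, but to close the proof you must either complete that Gram-matrix computation or, as the paper does, invoke the known result for ZF.
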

\begin{proof}
Please refer to Appendix \ref{PLemma1}.
\end{proof}
Since in the mMIMO systems, the asymptotic SE tends to infinity, an effective jammer tries to limit the asymptotic SE in \eqref{asymp_SE} (or even makes it to be zero). The necessary condition to achieve this goal is that $\alpha_k^{\left(\text{U}\right)}$ is always non-zero. Moreover, in a special worst-case situation, a jammer may have access to the pilot assigned to a particular user and attack that user (e.g., the $k$th user). In this case, the jammer selects $|\alpha_k^{\left(\text{U}\right)}|^2=1$. However, this is not the case that happens in reality due to that in a hostile environment, the BS knows such a jammer attacks the pilot phase, thus it uses a pseudo-random pilot hopping technique to assign pilots to the users at the beginning of each coherence block as a counter strategy \cite[Sec. V]{SecureMaMIMO}. Therefore, the jammer cannot \textbf{immediately} estimate which pilot is assigned to the target user nor exactly transmit the same pilot at each instant. Anyway, we assume that the jammer is smart so that it knows the transmission protocol and the pilot set $\left \{\bm{\phi}_1, \ldots, \bm{\phi}_{\tau} \right \}$. A smart jammer can be aware of this information by listening to the channel for some consecutive coherence blocks.
\begin{rem}
\textit{In a mMIMO system, the main aim of a jammer can be to limit the asymptotic SE of a system or a particular user. Since the jammer cannot know which pilot is currently used by the target user, to ensure effective attacking, it selects $\bm{\psi}_{w}$ such that the inner product of $\bm{\psi}_{w}$ and each $\bm{\phi}_{i}$ in the pilot set is always non-zero. The jammer needs to spread its power over all pilots, i.e., $0 < \E\{|\bm{\psi}_{w}^T \bm{\phi}_{i}^*|^2\} < 1,~i=1,\ldots,\tau$, to make the probability of $\alpha_k^{\left(\text{U}\right)}=0$ to be zero. Prior works assume that the jammer divides its average power equally over all pilots, i.e., $\E\{|\bm{\psi}_{w}^T \bm{\phi}_{i}^*|^2\} = 1/\tau$, that guarantees any jamming sequence is probabilistically equally good \cite{JamResTai,SecureMaMIMO}.}
\end{rem}
\textbf{Example:} An example of a jammer strategy that satisfies above description is
\begin{equation}
\bm{\psi}_w \sim \mathcal{CN}\left(\matr{0}, \frac{1}{\tau}\matr{I}_{\tau}\right) .
\end{equation}
We can compute
\begin{equation}
\E \left\{\left|\bm{\psi}_{w}^T \bm{\phi}_{i}^*\right|^2 \right\} = \frac{1}{\tau} ,
\end{equation}
because the pilots are normalized, i.e., $\bm{\phi}_{i}^T \bm{\phi}_{i}^* = 1$. The probability of $\bm{\psi}_{w}^T \bm{\phi}_{i}^* = 0$ is zero, since $\bm{\psi}_{w}$ is a continuous random vector, which is uniformly distributed over the unit sphere. \hspace{2.35cm} $\Box$

The number of pilots is fixed in practice, while the number of users vary and is usually less than $\tau$. Hence, there are generally $\tau-K \geq 1$ unused pilots which can be exploited to improve the security.

\section{Jamming Suppression Framework}
In this section, we propose the MMSE-JS estimator as the first step of our framework. Then, we use a ZFJS detector for next step and mathematically analyze the SE of the system.

\subsection{MMSE-JS Estimator}
To simplify the notation, we assume without loss of generality that pilots used in one coherence block are numbered such that the first $K$ pilots are assigned to the users and the remaining pilots are unused. The received signal in \eqref{received_matrix} is rewritten as
\begin{equation}\label{received_mat_simpler}
\matr{Y}_t = \sqrt{\tau p_t} \sum_{i=1}^{K}\matr{h}_{i} \bm{\phi}^T_{i} + \sqrt{\tau q_t} \matr{h}_w \bm{\psi}^T_{w} + \matr{N}_t .
\end{equation}
Since the signals received along the unused pilots include only the jamming signal (and noise), we exploit them to estimate the jammer's channel and improve the estimation of the users' channels. By projecting $\matr{Y}_t$ on each of the unused pilots, we have
\begin{equation}\label{unused_pilot_sig}
\matr{y}_{i} = \matr{Y}_t \bm{\phi}_{i}^* = \sqrt{\tau q_t} {\alpha}_i \matr{h}_w + \matr{n}_{i} ,
\quad i=K + 1,\ldots, \tau ,
\end{equation}
where $\alpha_i \triangleq \bm{\psi}^T_{w} \bm{\phi}^*_{i}$ and $\matr{n}_{i}=\matr{N}_t \bm{\phi}_{i}^* \sim \mathcal{CN} ( \matr{0} , \matr{I}_M )$. The MMSE-JS estimate of the jammer's effective channel $\tilde{\matr{h}}_{w_i} \triangleq \sqrt{\tau q_t} {\alpha}_i \matr{h}_w$ based on $\matr{y}_{i}$, which is derived from MMSE estimation \cite[Ch. 11]{BookEstKay}, is
\begin{equation}\label{Jam_est}
\hat{\tilde{\matr{h}}}_{w_i} = \frac{\mathcal{J}_i}{1 + \mathcal{J}_i} \matr{y}_{i} = \frac{\sqrt{\tau q_t} {\alpha}_i \mathcal{J}_i}{1 + \mathcal{J}_i} \matr{h}_w + \frac{\mathcal{J}_i}{1 + \mathcal{J}_i} \matr{n}_{i} ,
\end{equation}
where $\mathcal{J}_i \triangleq \tau q_t \left|{\alpha}_i\right|^2 \beta_w$. The estimation error in this case is calculated as $\tilde{\bm{\varepsilon}}_{w_i} = \hat{\tilde{\matr{h}}}_{w_i} - \tilde{\matr{h}}_{w_i}$, where $\tilde{\bm{\varepsilon}}_{w_i}$ is independent of $\hat{\tilde{\matr{h}}}_{w_i}$ and $\tilde{\bm{\varepsilon}}_{w_i} \sim \mathcal{CN} \left(\matr{0} , \frac{\mathcal{J}_i}{1 + \mathcal{J}_i}\matr{I}_M \right)$. Although $\mathcal{J}_i$ is unknown by the system, we will later show that it can be estimated using the asymptotic behaviour of mMIMO systems. We have $\tau - K$ estimated versions of jammer's effective channels $\hat{\tilde{\matr{h}}}_{w_i} , ~ i=K+1,\ldots,\tau$, therefore we select the one that has the lowest normalized estimation error, i.e., $\tilde{\bm{\varepsilon}}_{w_i}^{\left(n\right)} \triangleq \frac{\tilde{\bm{\varepsilon}}_{w_i}}{\sqrt{\tau q_t} \alpha_i}$. Hence, the estimated effective channel of the jammer is obtained as $\hat{\tilde{\matr{h}}}^{\star}_{w} = \frac{\mathcal{J}_{\text{o}}\matr{y}_{\text{o}}}{1 + \mathcal{J}_{\text{o}}}$ where the subscript $\text{o} = \underset{i}{\argmax}\left\{\mathcal{J}_i \right\}$. The related estimation error is denoted by $\tilde{\bm{\varepsilon}}^{\star}_{w}$. On the other hand, the projection of $\matr{Y}_t$ on the $k$th user's pilot sequence is
\begin{equation}\label{proj_est_act}
\matr{y}_{k} = \matr{Y}_t \bm{\phi}_{k}^* = \sqrt{\tau p_t} \matr{h}_{k} + \sqrt{\tau q_t} {\alpha}_k \matr{h}_w + \matr{n}_{k} ,
\end{equation}
where $\alpha_k \triangleq \bm{\psi}^T_{w} \bm{\phi}^*_{k}$ and $\matr{n}_{k}=\matr{N}_t \bm{\phi}_{k}^* \sim \mathcal{CN} \left( \matr{0} , \matr{I}_M \right)$. We want to estimate the channel of the $k$th user based on $\matr{y}_k$. To decrease the pilot contamination caused by the jammer, we need to suppress the jammer's term in \eqref{proj_est_act}. To this end, we exploit one of the received signals on the unused pilots. We select this signal from \eqref{unused_pilot_sig} and denote it by $\matr{y}_{\bar{k}}$. The MMSE-JS estimate of the $k$th user's channel based on $\matr{y}_k$ and $\matr{y}_{\bar{k}}$ is
\begin{equation}\label{leg_est}
\hat{\matr{h}}_{k} = \frac{\sqrt{\tau p_t} \beta_{k}}{1 + \delta_{k}^2 + \tau p_t \beta_{k}} \left(\matr{y}_{k} - \delta_{k} e^{j \theta_{k}} \matr{y}_{\bar{k}} \right) = \frac{\tau p_t \beta_{k}}{1 + \delta_{k}^2 + \tau p_t \beta_{k}} \matr{h}_{k} + \frac{\sqrt{\tau p_t} \beta_{k}}{1 + \delta_{k}^2 + \tau p_t \beta_{k}} \left(\matr{n}_k - \delta_{k} e^{j \theta_{k}} \matr{n}_{\bar{k}} \right) ,
\end{equation}
where $\delta_{k} \triangleq \left|\frac{\alpha_{k}}{\alpha_{\bar{k}}} \right|$ and $\theta_{k} \triangleq \angle \alpha_{k} - \angle \alpha_{\bar{k}}$. The estimation error is independent of $\hat{\matr{h}}_{k}$ and given by $\bm{\varepsilon}_{k} \sim \mathcal{CN} \left( \matr{0} , \frac{\left(1 + \delta_{k}^2 \right) \beta_{k}}{1 + \delta_{k}^2 + \tau p_t \beta_{k}} \matr{I}_M \right)$. The BS does not know $\delta_{k}$ and $\theta_{k}$, but it can estimate them using the asymptotic mMIMO behaviors, as shown next.

\subsection{Asymptotic Estimates}
To implement MMSE-JS, the BS needs to know the variables $\mathcal{J}_i , ~ i = K+1,\ldots, \tau$ and $\delta_{k} , \theta_{k} , ~ k = 1,\ldots,K$, which are generated randomly by the jammer. Recall that the subscript $\bar{k}$ in the definition of $\delta_{k}$ and $\theta_{k}$ refers to the received signal on the unused pilot selected from \eqref{unused_pilot_sig} to suppress the jammer's pilot contamination from the estimate of the $k$th user's channel.
\begin{theo} \label{Theo1}
In mMIMO systems, $\mathcal{J}_i$ can be estimated by
\begin{equation}
\label{asymp_MaMIMO}
\mathcal{J}_{i} = \left[\frac{\left\|\matr{y}_i \right\|^2}{M} - 1 \right]^+ , ~ i = K+1,\ldots,\tau, ~ M \rightarrow \infty ,
\end{equation}
where $\matr{y}_i$ is the received signals in \eqref{unused_pilot_sig}, and $[x]^+ \triangleq \max (0,x)$ for $x \in \mathbb{R}$. In addition, $\delta_{k} , \theta_{k}$ are obtained as
\begin{align}
\delta_{k} & = \Vast \{\begin{array}{lr}
\sqrt{\frac{\left[\frac{\left\|\matr{y}_k \right\|^2}{M} - \tau p_t \beta_{k} - 1\right]^+}{\frac{\left\|\matr{y}_{\bar{k}} \right\|^2}{M} - 1}} , & \frac{\left\|\matr{y}_{\bar{k}} \right\|^2}{M} > 1 , \\
0 , & \frac{\left\|\matr{y}_{\bar{k}} \right\|^2}{M} \leq 1 ,
\end{array} ~ M \rightarrow \infty , \label{asymp_MaMIMO2} \\
\theta_{k} & = \angle \frac{\matr{y}_{\bar{k}}^H \matr{y}_{k}}{M} , ~ M \rightarrow \infty \label{asymp_MaMIMO3},
\end{align}
for $k=1,\ldots,K$ following the received signal on the $k$th user's pilot $\matr{y}_k$ in \eqref{proj_est_act} as well as $\matr{y}_{\bar{k}}$ from \eqref{unused_pilot_sig}.
\end{theo}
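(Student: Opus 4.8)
The plan is to evaluate the three normalized quantities $\|\matr{y}_i\|^2/M$, $\|\matr{y}_k\|^2/M$, and $\matr{y}_{\bar{k}}^H \matr{y}_k / M$ in the large-antenna regime by invoking the law of large numbers, which underlies the channel hardening and asymptotic favorable-propagation properties of mMIMO. The three building blocks I would establish first are: (i) for a channel vector with i.i.d.\ $\mathcal{CN}(0,\beta)$ entries, $\|\matr{h}\|^2/M \to \beta$; (ii) for the normalized noise, $\|\matr{n}\|^2/M \to 1$; and (iii) for any two statistically independent zero-mean vectors $\matr{a},\matr{b}$ drawn from $\{\matr{h}_k,\matr{h}_w,\matr{n}_k,\matr{n}_{\bar{k}}\}$, the normalized inner product $\matr{a}^H \matr{b}/M \to 0$. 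All of these follow from the independence of the entries and the finite moments of the circularly symmetric Gaussians.

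For $\mathcal{J}_i$, I would substitute the model \eqref{unused_pilot_sig} into $\|\matr{y}_i\|^2/M$ and expand the squared norm into the self-term $\tau q_t |\alpha_i|^2 \|\matr{h}_w\|^2/M$, the noise term $\|\matr{n}_i\|^2/M$, and a cross term $2\,\mathrm{Re}(\sqrt{\tau q_t}\,\alpha_i^* \matr{h}_w^H \matr{n}_i)/M$. By (i)--(iii) the self-term tends to $\tau q_t |\alpha_i|^2 \beta_w = \mathcal{J}_i$, the noise term to $1$, and the cross term to $0$, giving $\|\matr{y}_i\|^2/M \to \mathcal{J}_i + 1$, which rearranges to \eqref{asymp_MaMIMO}. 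For $\delta_{k}$, the key observation is that $\delta_{k}^2 = |\alpha_k|^2/|\alpha_{\bar{k}}|^2 = \mathcal{J}_k/\mathcal{J}_{\bar{k}}$ with $\mathcal{J}_k \triangleq \tau q_t |\alpha_k|^2 \beta_w$; I would then apply the same expansion to $\|\matr{y}_k\|^2/M$ from \eqref{proj_est_act}, where now a further self-term $\tau p_t \|\matr{h}_k\|^2/M \to \tau p_t \beta_k$ appears and all three cross terms vanish, so that $\|\matr{y}_k\|^2/M \to \tau p_t \beta_k + \mathcal{J}_k + 1$. Subtracting $\tau p_t \beta_k + 1$ recovers $\mathcal{J}_k$, and dividing by the already-computed $\mathcal{J}_{\bar{k}} = \|\matr{y}_{\bar{k}}\|^2/M - 1$ yields the ratio inside the square root in \eqref{asymp_MaMIMO2}. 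For $\theta_{k}$, I would expand the bilinear form $\matr{y}_{\bar{k}}^H \matr{y}_k / M$: every pairing of two distinct independent vectors vanishes by (iii), leaving only the common jammer term $\tau q_t \alpha_{\bar{k}}^* \alpha_k \|\matr{h}_w\|^2/M \to \tau q_t \beta_w\, \alpha_{\bar{k}}^* \alpha_k$. Since $\tau q_t \beta_w > 0$ is a positive real scalar its angle drops out, and $\angle(\alpha_{\bar{k}}^* \alpha_k) = \angle\alpha_k - \angle\alpha_{\bar{k}} = \theta_{k}$, establishing \eqref{asymp_MaMIMO3}.

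The routine part is the term-by-term expansion of these quadratic and bilinear forms; the points requiring care are the clipping operators $[\cdot]^+$ and the piecewise branch in \eqref{asymp_MaMIMO2}. These are not artifacts of the limit---in the limit $\mathcal{J}_i,\mathcal{J}_k \geq 0$ and $\|\matr{y}_{\bar{k}}\|^2/M > 1$ hold automatically whenever $\alpha_{\bar{k}} \neq 0$---but are safeguards for finite $M$, where the empirical averages fluctuate around their limits and could yield a slightly negative argument or a nonpositive denominator. I would justify them by noting that $\mathcal{J}_i,\mathcal{J}_k$ are nonnegative by construction, so that $[\cdot]^+$ merely projects the finite-sample estimate back onto the feasible set, while the degenerate branch $\|\matr{y}_{\bar{k}}\|^2/M \le 1$ corresponds precisely to the vanishing-jamming case $\mathcal{J}_{\bar{k}} \to 0$, in which $\delta_{k}$ is immaterial to the estimator \eqref{leg_est} and is harmlessly set to $0$. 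I expect the only (modest) obstacle to be stating the mode of convergence precisely and confirming this exact correspondence between the fallback branch and the $\mathcal{J}_{\bar{k}} \to 0$ regime.
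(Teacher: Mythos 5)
Your proof is correct and follows essentially the same route as the paper's: expand the quadratic and bilinear forms, invoke channel hardening and asymptotic orthogonality (law of large numbers) so that cross terms vanish and self terms converge, then rearrange, with the $[\cdot]^+$ and the piecewise branch justified as finite-$M$ safeguards exactly as the paper does. In fact you supply details the paper compresses, notably the explicit expansion $\|\matr{y}_k\|^2/M \to \tau p_t \beta_k + \mathcal{J}_k + 1$ and the identity $\delta_k^2 = \mathcal{J}_k/\mathcal{J}_{\bar{k}}$ behind the paper's one-line remark that $\delta_k$ ``is obtained analogous to the proof of $\mathcal{J}_i$.''
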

\begin{proof}
Please refer to Appendix \ref{PTheorem1}.
\end{proof}
The number of complex multiplications required to implement the MMSE-JS is $M\left(3\tau+4\right)+16$, which is linear with respect to $M$ and $\tau$ (similar to the complexity of the LMMSE). Hence, the implementation of the MMSE-JS is practical.

\subsection{ZFJS Detector}
We exploit ZFJS based on the estimated channels of the users $\hat{\matr{h}}_{k},~k=1,\ldots,K$ and the estimated channel of the jammer $\hat{\tilde{\matr{h}}}^{\star}_w$. The ZFJS is defined as $\matr{V} \triangleq \matr{H}(\matr{H}^H \matr{H})^{-1}$, where $\matr{H} \triangleq [\hat{\matr{h}}_{1},\ldots,\hat{\matr{h}}_{K} , \hat{\tilde{\matr{h}}}^{\star}_w ] \in \mathbb{C}^{M \times (K+1)}$ and $M > K+1$. We null the interference caused by the jammer and also other users using the ZFJS detector. The number of complex multiplications for computing the ZFJS is $M(K+1)^2+((K+1)^3-(K+1))/3$ as well as  the linear detectors with the same
complexities as the ZFJS have been already fabricated using $28~\nm$ FD-SOI (Fully Depleted Silicon On Insulator) technology in \cite[Sec. V]{MaMIMOImpLarsson} and Xilinx Virtex-7 XC7VX690T FPGA in \cite[Sec. VI]{MaMIMOImpZeng} for $M=128$ and $K=8$.
\begin{lem} \label{Lemma2}
Using ZFJS method based on the channels estimated by MMSE-JS, the achievable SE of the $k$th user is
\begin{equation}
\label{sum_se_imperfect}
\mathcal{S}_k = \left(1-\frac{\tau}{T}\right) \E \left\{\log_2 \left(1+{\rho}_{k} \right) \right\} ,
\end{equation}
where ${\rho}_{k}$ the effective SINR of the $k$th user is given by
\begin{equation}\label{sinr_imperfect}
{\rho}_{k} = \frac{p_d}{\var \left\{\sqrt{p_d} \sum_{i=1}^{K} \matr{v}_k^H \bm{\varepsilon}_{i} + \sqrt{\frac{q_d}{\mathcal{J}_{\text{o}}}} \matr{v}_k^H \tilde{\bm{\varepsilon}}^{\star}_w \Big| \alpha_k \right\} + \E \left\{\left\|\matr{v}_k \right\|^2 \Big| \alpha_k \right\}} ,
\end{equation}
and $\matr{v}_k$ is the $k$th column of the ZFJS matrix $\matr{V}$.
\end{lem}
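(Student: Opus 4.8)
The plan is to pass the data-phase signal \eqref{rec_sig_data_phase} through the ZFJS filter and then apply a use-and-then-forget (UatF) lower bound on the achievable rate, conditioned on $\alpha_k$. I would first form $\matr{v}_k^H\matr{y}_d=\sqrt{p_d}\sum_{i=1}^{K}\matr{v}_k^H\matr{h}_i x_i+\sqrt{q_d}\,\matr{v}_k^H\matr{h}_w x_w+\matr{v}_k^H\matr{n}_d$ and exploit the defining identity of the detector, $\matr{V}^H\matr{H}=(\matr{H}^H\matr{H})^{-1}\matr{H}^H\matr{H}=\matr{I}_{K+1}$, whose $k$th row gives $\matr{v}_k^H\hat{\matr{h}}_i=1$ for $i=k$, $\matr{v}_k^H\hat{\matr{h}}_i=0$ for $i\neq k$, and $\matr{v}_k^H\hat{\tilde{\matr{h}}}^{\star}_w=0$.

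Next I would substitute the MMSE-JS decompositions $\matr{h}_i=\hat{\matr{h}}_i+\bm{\varepsilon}_i$ and, for the jammer, $\sqrt{\tau q_t}\alpha_{\text{o}}\matr{h}_w=\hat{\tilde{\matr{h}}}^{\star}_w-\tilde{\bm{\varepsilon}}^{\star}_w$, i.e. $\matr{h}_w=(\hat{\tilde{\matr{h}}}^{\star}_w-\tilde{\bm{\varepsilon}}^{\star}_w)/(\sqrt{\tau q_t}\alpha_{\text{o}})$. The ZF identities annihilate every estimate except $\matr{v}_k^H\hat{\matr{h}}_k=1$, leaving $\matr{v}_k^H\matr{y}_d=\sqrt{p_d}\,x_k+\sqrt{p_d}\sum_{i=1}^{K}\matr{v}_k^H\bm{\varepsilon}_i x_i-\tfrac{\sqrt{q_d}}{\sqrt{\tau q_t}\alpha_{\text{o}}}\matr{v}_k^H\tilde{\bm{\varepsilon}}^{\star}_w x_w+\matr{v}_k^H\matr{n}_d$. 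Note that the useful gain $\matr{v}_k^H\matr{h}_k=1+\matr{v}_k^H\bm{\varepsilon}_k$ is random; the UatF step replaces it by its conditional mean $\E\{\matr{v}_k^H\matr{h}_k|\alpha_k\}=1$ (valid because the MMSE error is uncorrelated with the estimate, hence with the combiner $\matr{v}_k$) and pushes the zero-mean fluctuation $\matr{v}_k^H\bm{\varepsilon}_k$ into the effective noise, which is precisely why the leakage sum in \eqref{sinr_imperfect} runs over all $i=1,\ldots,K$, including $i=k$.

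I would then verify that, given $\alpha_k$, the effective noise is zero-mean and uncorrelated with $\sqrt{p_d}\,x_k$: the symbols $x_i,x_w$ and $\matr{n}_d$ are independent, zero-mean and unit-variance, and each error is independent of $\matr{v}_k$, so every cross moment with $x_k$ vanishes (in particular $\E\{|x_k|^2\}\E\{\matr{v}_k^H\bm{\varepsilon}_k\}=0$). Invoking the standard result that additive noise uncorrelated with the input and of prescribed variance is worst-case Gaussian, a conditional rate $\log_2(1+\rho_k)$ is achievable with $\rho_k$ equal to $p_d$ divided by the conditional second moment of the effective noise. Computing that moment, the mutual independence of the data symbols and of the distinct error vectors makes all cross terms vanish, so the user- and jammer-error leakage collapse into the single $\var\{\cdot\,|\alpha_k\}$ term while $\matr{v}_k^H\matr{n}_d$ contributes $\E\{\|\matr{v}_k\|^2|\alpha_k\}$; multiplying by the data-fraction prelog $(1-\tau/T)$ and averaging over $\alpha_k$ yields \eqref{sum_se_imperfect}--\eqref{sinr_imperfect}.

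The step I expect to demand the most care is converting the jammer's physical-channel leakage into $\tilde{\bm{\varepsilon}}^{\star}_w$ with the coefficient $\sqrt{q_d/\mathcal{J}_{\text{o}}}$ quoted in \eqref{sinr_imperfect}: because ZFJS nulls the \emph{effective} estimate $\hat{\tilde{\matr{h}}}^{\star}_w=\sqrt{\tau q_t}\alpha_{\text{o}}\matr{h}_w+\tilde{\bm{\varepsilon}}^{\star}_w$ rather than $\matr{h}_w$ itself, one must propagate the scalar $\sqrt{\tau q_t}\alpha_{\text{o}}$ through the substitution and reconcile it with $\mathcal{J}_{\text{o}}=\tau q_t|\alpha_{\text{o}}|^2\beta_w$ and the error variance $\tfrac{\mathcal{J}_{\text{o}}}{1+\mathcal{J}_{\text{o}}}$ to land on the stated scaling. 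A related point needing justification is that the UatF mean gain is exactly $1$ and that errors from distinct sources add incoherently, both of which rest on the orthogonality/independence of the MMSE estimation errors from the ZFJS combiner and from one another.
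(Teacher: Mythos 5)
Your proposal is correct and follows essentially the same route as the paper's own (much terser) proof: the use-and-then-forget bound from \cite[Sec. 2.3.5]{BookMarzetta} conditioned on the side information $\alpha_k$, the ZFJS identities $\matr{v}_k^H \hat{\matr{h}}_i = \delta_{ik}$ and $\matr{v}_k^H \hat{\tilde{\matr{h}}}^{\star}_w = 0$ applied to the estimate-plus-error decomposition, and the noise term contributing $\E \{\|\matr{v}_k\|^2 \,|\, \alpha_k \}$. The jammer-coefficient subtlety you flag is genuine—direct substitution gives $\sqrt{q_d \beta_w / \mathcal{J}_{\text{o}}}$ rather than the stated $\sqrt{q_d/\mathcal{J}_{\text{o}}}$—but this $\beta_w$ factor is an inconsistency inherited from the paper itself (invisible in its numerics, where $\beta_w = 1$), not a gap in your argument.
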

\begin{proof}
Please refer to Appendix \ref{PLemma2}.
\end{proof}
\begin{cor}\label{cor1}
For $\tau \geq 2K+1$, the SINR in \eqref{sinr_imperfect} becomes
\begin{equation}\label{sinr_special_case}
{\rho}_{k} = \frac{\frac{\tau p_t p_d \beta_{k}^2}{1 + \delta_{k}^2 + \tau p_t \beta_{k}} \left(M-K-1 \right)}{\sum_{i=1}^{K} \frac{p_d \left(1 + \delta_{i}^2 \right) \beta_{i}}{1 + \delta_{i}^2 + \tau p_t \beta_{i}} + \frac{q_d}{1 + \mathcal{J}_{o}} + 1} .
\end{equation}
\end{cor}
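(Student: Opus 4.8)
Begin by reading off what the hypothesis $\tau\ge 2K+1$ buys us, because this is the crux. Estimating the channel of user $k$ through \eqref{leg_est} uses one unused pilot $\bar k$, whose noise vector $\matr{n}_{\bar k}$ enters the error $\bm{\varepsilon}_k$, while estimating the jammer's effective channel through \eqref{Jam_est} uses the unused pilot $\mathrm{o}$, whose noise $\matr{n}_{\mathrm{o}}$ enters $\tilde{\bm{\varepsilon}}^\star_w$. Since there are $\tau-K$ unused pilots and we need $K$ distinct ones for the users plus one more for the jammer, the inequality $\tau-K\ge K+1$, i.e. $\tau\ge 2K+1$, is exactly what lets us take $\bar 1,\ldots,\bar K,\mathrm{o}$ pairwise distinct. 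First I would record the consequence: because $\delta_k e^{j\theta_k}=\alpha_k/\alpha_{\bar k}$ cancels the jammer contribution \emph{exactly} in \eqref{leg_est}, each $\hat{\matr{h}}_k$ (hence $\bm{\varepsilon}_k$) is a function of $\matr{h}_k,\matr{n}_k,\matr{n}_{\bar k}$ only, whereas $\hat{\tilde{\matr{h}}}^\star_w$ and $\tilde{\bm{\varepsilon}}^\star_w$ depend only on $\matr{h}_w,\matr{n}_{\mathrm{o}}$. With the indices pairwise distinct no noise vector is shared, so $\hat{\matr{h}}_1,\ldots,\hat{\matr{h}}_K,\hat{\tilde{\matr{h}}}^\star_w$ are mutually independent, and $\bm{\varepsilon}_1,\ldots,\bm{\varepsilon}_K,\tilde{\bm{\varepsilon}}^\star_w$ are mutually independent and jointly independent of $\matr{V}$ (which is built solely from the estimates).

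Next I would simplify the denominator of \eqref{sinr_imperfect}. By independence and zero mean all cross-covariances vanish, so the conditional variance splits as $p_d\sum_{i=1}^{K}\E\{|\matr{v}_k^H\bm{\varepsilon}_i|^2\mid\alpha_k\}+\tfrac{q_d}{\mathcal{J}_{\mathrm{o}}}\E\{|\matr{v}_k^H\tilde{\bm{\varepsilon}}^\star_w|^2\mid\alpha_k\}$. Since $\E\{\bm{\varepsilon}_i\bm{\varepsilon}_i^H\}=\sigma_{\varepsilon_i}^2\matr{I}_M$ with $\sigma_{\varepsilon_i}^2=\tfrac{(1+\delta_i^2)\beta_i}{1+\delta_i^2+\tau p_t\beta_i}$ and $\tilde{\bm{\varepsilon}}^\star_w$ has covariance $\tfrac{\mathcal{J}_{\mathrm{o}}}{1+\mathcal{J}_{\mathrm{o}}}\matr{I}_M$, and since both are independent of $\matr{v}_k$, each term reduces to its variance times $\E\{\|\matr{v}_k\|^2\mid\alpha_k\}$. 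Adding the noise term $\E\{\|\matr{v}_k\|^2\mid\alpha_k\}$ and factoring it out, the denominator becomes $\E\{\|\matr{v}_k\|^2\mid\alpha_k\}\big(p_d\sum_{i=1}^{K}\sigma_{\varepsilon_i}^2+\tfrac{q_d}{1+\mathcal{J}_{\mathrm{o}}}+1\big)$, which already reproduces the denominator of \eqref{sinr_special_case}.

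It then remains to evaluate $\E\{\|\matr{v}_k\|^2\mid\alpha_k\}$. From $\matr{V}=\matr{H}(\matr{H}^H\matr{H})^{-1}$ we get $\matr{V}^H\matr{V}=(\matr{H}^H\matr{H})^{-1}$, so $\|\matr{v}_k\|^2=[(\matr{H}^H\matr{H})^{-1}]_{kk}$. Conditioned on the $\alpha$'s, \eqref{leg_est} shows $\hat{\matr{h}}_k\sim\mathcal{CN}(\matr{0},\gamma_k\matr{I}_M)$ with $\gamma_k=\tfrac{\tau p_t\beta_k^2}{1+\delta_k^2+\tau p_t\beta_k}$, and $\hat{\tilde{\matr{h}}}^\star_w$ is zero-mean Gaussian with some per-antenna variance $\gamma_w$. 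Writing $\matr{H}=\matr{G}\,\mathrm{diag}(\sqrt{\gamma_1},\ldots,\sqrt{\gamma_K},\sqrt{\gamma_w})$ with $\matr{G}$ having i.i.d.\ $\mathcal{CN}(0,1)$ entries — legitimate precisely because the columns are independent — the diagonal scaling gives $[(\matr{H}^H\matr{H})^{-1}]_{kk}=\gamma_k^{-1}[(\matr{G}^H\matr{G})^{-1}]_{kk}$, so $\gamma_w$ and the other variances drop out of this diagonal entry. The complex inverse-Wishart mean $\E\{[(\matr{G}^H\matr{G})^{-1}]_{kk}\}=\tfrac{1}{M-(K+1)}$ (valid for $M>K+1$) then yields $\E\{\|\matr{v}_k\|^2\mid\alpha_k\}=\tfrac{1+\delta_k^2+\tau p_t\beta_k}{\tau p_t\beta_k^2\,(M-K-1)}$. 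Substituting this into the factored denominator and inverting, the numerator $\tfrac{\tau p_t p_d\beta_k^2}{1+\delta_k^2+\tau p_t\beta_k}(M-K-1)$ of \eqref{sinr_special_case} emerges exactly as $p_d/\E\{\|\matr{v}_k\|^2\mid\alpha_k\}$, completing the identification.

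I expect the main obstacle to be the independence bookkeeping of the first paragraph: one must verify that the exact cancellation of $\matr{h}_w$ in \eqref{leg_est} leaves each $\bm{\varepsilon}_k$ depending on a private pair $(\matr{n}_k,\matr{n}_{\bar k})$, and that the pairwise-distinct choice $\bar 1,\ldots,\bar K,\mathrm{o}$ — available only when $\tau\ge 2K+1$ — removes every shared noise vector. This is the single structural fact doing all the work, since it both kills the cross terms in the variance and licenses the Wishart reduction of $\E\{\|\matr{v}_k\|^2\mid\alpha_k\}$; everything else is routine simplification.
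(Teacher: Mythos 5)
Your proposal is correct and follows essentially the same route as the paper's proof: use $\tau \geq 2K+1$ to assign pairwise distinct unused pilots to the $K$ users and the jammer so that the estimation errors are mutually independent (and independent of $\matr{V}$), factor the conditional variance in \eqref{sinr_imperfect} as the sum of error variances times $\E\{\|\matr{v}_k\|^2 \mid \alpha_k\}$, and then evaluate $\E\{\|\matr{v}_k\|^2 \mid \alpha_k\} = \frac{1+\delta_k^2+\tau p_t \beta_k}{\tau p_t \beta_k^2 (M-K-1)}$ via the inverse-Wishart mean (the paper's citation of \cite[Lemma 2.10]{BookTulino}). You simply spell out details the paper leaves implicit, such as the exact cancellation of $\matr{h}_w$ in \eqref{leg_est} and the diagonal-scaling reduction of $\matr{H}$ to an i.i.d.\ Gaussian matrix.
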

\begin{proof}
When $\tau \geq 2K+1$, we can use different $\matr{y}_{\bar{k}}$ in \eqref{leg_est} for all users and also a different $\matr{y}_{\text{o}}$ for the jammer, therefore the estimation errors are mutually independent. Hence, $\var \{\sqrt{p_d} \sum_{i=1}^{K} \matr{v}_k^H \bm{\varepsilon}_{i} + \sqrt{\frac{q_d}{\mathcal{J}_{\text{o}}}} \matr{v}_k^H \tilde{\bm{\varepsilon}}^{\star}_{w} | \alpha_k \} = 
(\sum_{i=1}^{K} \frac{p_d \left(1 + \delta_{i}^2 \right) \beta_{i}}{1 + \delta_{i}^2 + \tau p_t \beta_{i}} + \frac{q_d}{1 + \mathcal{J}_{\text{o}}}) \text{E} \{\left\|\matr{v}_k \right\|^2 | \alpha_k \}$. The proof is complete after simplifying $\text{E} \{\|\matr{v}_k \|^2 | \alpha_k \}$ by using Wishart matrix properties \cite[Lemma 2.10]{BookTulino}.
\end{proof}
\begin{rem}
\textit{From \eqref{sinr_special_case}, ${\rho}_{k} \rightarrow \infty$ as $M \rightarrow \infty$. This means that the jammer cannot dramatically affect SE of the $k$th user when the system is equipped with the MMSE-JS estimator and ZFJS detector.}
\end{rem}

\section{Numerical Results}
We numerically evaluate the sum-SE of a system that exploits MMSE-JS and ZFJS during the pilot and data phases, respectively. We consider $T=200$ samples per coherence block and $10^5$ independent runs for Monte-Carlo simulations. We normalize the large-scale fading coefficients $\beta_{i}=\beta_w=1$. Since the noise variance is previously normalized, the transmit powers can be interpreted as the signal-to-noise ratios. Theorem \ref{Theo1} is used for implementation of the MMSE-JS estimator.

\begin{figure}
\centering
\includegraphics[width=0.5\columnwidth]{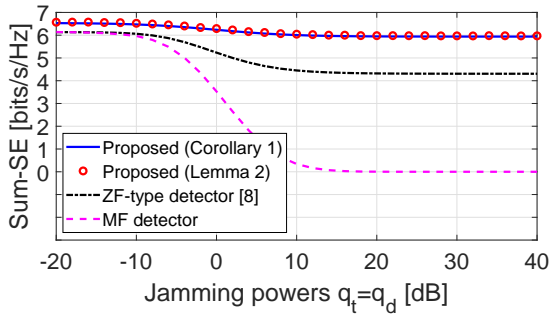}
\caption{Sum-SE of the system versus the jamming powers $q_t=q_d$ for $M = 100$, $K = 1$, $\tau = 3$, $p_t = p_d = 5 \dB$.}
\label{fig2}
\end{figure}
Fig.~\ref{fig2} depicts how the sum-SE varies with respect to the jamming powers for $M=100$, $K=1$, $\tau=3$, and $p_t=p_d=5 \dB$. The Monte-Carlo simulation of Lemma \ref{Lemma2} matches the closed-form expression in Corollary \ref{cor1}. Moreover, we compare the performance of the system equipped with the MMSE-JS estimator and the ZFJS detector, to the system equipped with the LMMSE estimator and the ZF-type detector proposed in \cite{JamResTai}. The sum-SE of a system equipped with the LMMSE estimator and MF detector is also illustrated to indicate the robustness level of the proposed framework against jamming attacks. This figure shows that the sum-SE with the proposed framework is substantially higher than in prior works. This happens since MMSE-JS decreases the pilot contamination caused by the jammer, while ZFJS exploits both the estimated user channels and the estimated channel of the jammer to detect the users' data symbols.

\begin{figure}
\centering
\includegraphics[width=0.5\columnwidth]{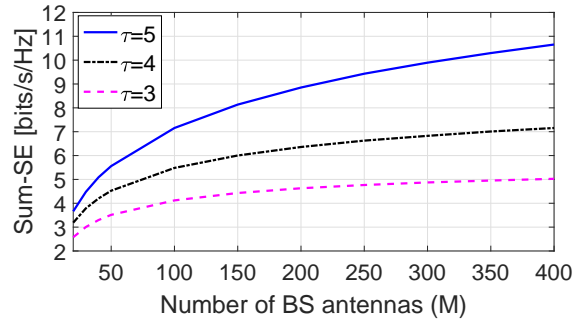}
\caption{Sum-SE of the system versus the number of BS antennas for $K = 2$, $\tau \in \{3,4,5\}$, and $p_t = p_d = q_t = q_d = 2 \dB$.}
\label{fig3}
\end{figure}
In Fig.~\ref{fig3}, the sum-SE is shown in terms of the number of BS antennas for $K = 2$, $\tau \in \{3,4,5\}$, and $p_t=p_d=q_t=q_d=2 \dB$. We can see that the performance of the system improves when the number of pilots increases. This is because $\matr{y}_{\bar{1}}$, $\matr{y}_{\bar{2}}$, and $\matr{y}_{\text{o}}$ in MMSE-JS can be selected to be different for larger $\tau$, therefore the correlation of the estimated channels decreases. For example for $\tau = 5$, we have three unused pilot signals, therefore each of them is selected as one of the signals $\matr{y}_{\bar{1}}$, $\matr{y}_{\bar{2}}$, and $\matr{y}_{\text{o}}$, subsequently the estimated channels are uncorrelated. We know that the linear detectors (like the ZFJS) perform better as the correlation of the estimated channels reduces.

\section{Conclusion}
In this paper, a framework was proposed to mitigate jamming during the uplink transmission of mMIMO systems. The proposed framework exploits the new MMSE-JS estimator and the ZFJS detector during the pilot and data phases, respectively. MMSE-JS uses unused pilots to reduce the pilot contamination caused by the jammer. ZFJS suppresses the jamming interference from the received signals at the BS during the data phase. Our analysis shows that the mMIMO system equipped with MMSE-JS and ZFJS is robust against jamming attacks even with strong jamming powers.

%

\appendix
\subsection{Proof of Lemma 1}\label{PLemma1}
Based on $\matr{v}_k^H \matr{y}_d$ ($\matr{v}_k$ is the $k$th column of matrix $\matr{V}$ and $\matr{y}_d$ is the received vector in \eqref{rec_sig_data_phase}), an achievable SE of the $k$th user is \cite[Sec. 2.3.5]{BookMarzetta}
\setcounter{equation}{17}
\begin{equation}\label{ach_SE_kth_user}
\mathcal{S}_k = \left(1-\frac{\tau}{T} \right) \E \left\{\log_2 \left(1 + \rho_k \right) \right\} ,
\end{equation}
where the effective SINR of the $k$th user, $\rho_k$, is obtained as
\setcounter{equation}{18}
\begin{equation}\label{eff_sjnr_kth_user}
\rho_k = \frac{p_d \left|\text{E} \left\{\matr{v}_k^H \matr{h}_{k} \Big| \alpha_k^{\left(\text{U}\right)} \right\} \right|^2}{p_d \sum_{i=1}^{K} \text{E} \left\{\left|\matr{v}_k^H \matr{h}_{i} \right|^2 \Big| \alpha_k^{\left(\text{U}\right)} \right\} - p_d \left|\text{E} \left\{\matr{v}_k^H \matr{h}_{k} \Big| \alpha_k^{\left(\text{U}\right)} \right\} \right|^2 + q_d \text{E} \left\{\left|\matr{v}_k^H \matr{h}_{w} \right|^2 \Big| \alpha_k^{\left(\text{U}\right)} \right\} + \text{E} \left\{\left\|\matr{v}_k \right\|^2 \Big| \alpha_k^{\left(\text{U}\right)} \right\}} .
\end{equation}
The expectations in \eqref{eff_sjnr_kth_user} are computed by assuming the MF detector at the BS, i.e., $\matr{v}_k = \hat{\matr{h}}_k$, in closed form as follows:
\setcounter{equation}{19}
\begin{equation}\label{Nominator}
\text{E} \left\{\hat{\matr{h}}_k^H \matr{h}_{k} \Big| \alpha_k^{\left(\text{U}\right)} \right\} = BM\left(1+\tau q_t \E \left\{\left|\alpha_k^{\left(\text{U}\right)} \right|^2 \right\} \beta_w+\tau p_t \beta_k \right) ,
\end{equation}
\begin{equation}\label{second_order_moment_users}
\text{E} \left\{\left|\hat{\matr{h}}_{k}^H \matr{h}_{i} \right|^2 \Big| \alpha_k^{\left(\text{U}\right)} \right\} =\vast\{ \begin{array}{ll}
BM \left(\tau p_t M \beta_k^2 + \beta_k \left(1 + \tau q_t \left|\alpha_k^{\left(\text{U}\right)}\right|^2 \beta_w + \tau p_t \beta_k \right) \right) ,& i=k \\
BM \beta_i \left(1 + \tau q_t \left|\alpha_k^{\left(\text{U}\right)}\right|^2 \beta_w + \tau p_t \beta_k \right) ,& i \neq k
\end{array},
\end{equation}
\begin{equation}\label{second_order_moment_jam}
\text{E} \left\{\left|\hat{\matr{h}}_{k}^H \matr{h}_{w} \right|^2 \Big| \alpha_k^{\left(\text{U}\right)} \right\} = BM \left( \tau q_t \left|\alpha_k^{\left(\text{U}\right)} \right|^2 M \beta_w^2 + \beta_{w} \left(1 + \tau q_t \left|\alpha_k^{\left(\text{U}\right)}\right|^2 \beta_w + \tau p_t \beta_k \right) \right) ,
\end{equation}
\begin{equation}\label{rec_noise}
\text{E} \left\{\left\|\hat{\matr{h}}_k\right\|^2 \Big| \alpha_k^{\left(\text{U}\right)} \right\} = BM \left(1 + \tau q_t \left|\alpha_k^{\left(\text{U}\right)}\right|^2 \beta_w + \tau p_t \beta_k \right) ,
\end{equation}
where $B = \tau p_t \beta_k^2/(1+\tau q_t \E \{|\alpha_k^{\left(\text{U}\right)}|^2 \} \beta_w+\tau p_t \beta_k)^2$. Regarding the computations in \eqref{Nominator}$-$\eqref{rec_noise}, two terms with scaling factor $M^2$ remain in $\rho_k$ which are related to the desired channel and the jammer's interference, while the other terms scale as $M$. Hence, we have
\begin{equation}\label{asymp_sinr}
\rho_k^{\left(\text{asy}\right)} = \frac{p_t p_d \beta_k^2}{q_t q_d \left|\alpha_k^{\left(\text{U}\right)} \right|^2 \beta_w^2}~\text{as}~M \rightarrow \infty .
\end{equation}
The limit in \eqref{asymp_sinr} is also achievable for ZF and MMSE detectors \cite[Proposition 5]{NgoSpec}.

\subsection{Proof of Theorem 1}\label{PTheorem1}
To prove \eqref{asymp_MaMIMO}, we have
\begin{equation}
\label{ch_harden}
\frac{\left\|\matr{y}_i \right\|^2}{M} = \frac{1}{M} \left(\sqrt{\tau q_t} {\alpha}_i \matr{h}_w + \matr{n}_{i}\right)^H \left(\sqrt{\tau q_t} {\alpha}_i \matr{h}_w + \matr{n}_{i}\right) ,
\end{equation}
according \eqref{unused_pilot_sig}. There exist four terms in \eqref{ch_harden}, where two of them tend to zero as $M \rightarrow \infty$, i.e., $\frac{\matr{h}_w^H\matr{n}_{i}}{M} = \frac{\matr{n}_i^H\matr{h}_w}{M} = 0 ,$
and the terms $\tau q_t \left|\alpha_{i}\right|^2 \frac{\|\matr{h}_w\|^2 }{M} = \mathcal{J}_i$, $\frac{\|\matr{n}_i\|^2 }{M} = 1$. For finite $M$, it can happen that $\frac{\left\|\matr{y}_i \right\|^2}{M} < 1$, but then the pilot jamming attack is negligible. Hence, the negative values of $\mathcal{J}_i$ in \eqref{asymp_MaMIMO} are ignored and replaced by zero. Furthermore, $\delta_{k}$ in \eqref{asymp_MaMIMO2} is obtained analogous to the proof of $\mathcal{J}_i$. To prove \eqref{asymp_MaMIMO3}, following from \eqref{unused_pilot_sig}, \eqref{proj_est_act}, and the asymptotic behaviour of the mMIMO systems, all terms in $\frac{\matr{y}_{\bar{k}}^H \matr{y}_{k}}{M}$ tend to zero except
\begin{equation}\label{angle_proof_MaMIMO}
\frac{\matr{y}_{\bar{k}}^H \matr{y}_{k}}{M} = \tau q_t \left|\alpha_{k}\alpha_{\bar{k}}\right| e^{j \theta_{k}} \frac{\left\|\matr{h}_{w}\right\|^2}{M} = \tau q_t \left|\alpha_{k}\alpha_{\bar{k}}\right| \beta_{w} e^{j \theta_{k}} .
\end{equation}
We can see that the angle of $\frac{\matr{y}_{\bar{k}}^H \matr{y}_{k}}{M}$ tends to $\theta_{k}$ as $M \rightarrow \infty$.

\subsection{Proof of Lemma 2}\label{PLemma2}
An achievable SE of the $k$th user is obtained \cite[Section 2.3.5]{BookMarzetta} as
$\mathcal{S}_k = (1-\frac{\tau}{T}) \E \{\log_{2} (1 + \frac{p_d \left|\E\left\{\matr{v}_k^H \matr{h}_{k} | \alpha_k \right\} \right|^2}{\var \left\{\tilde{n}_{k} | \alpha_k \right\}}) \}$,
by the side information $\alpha_k$, where the desired signal is received over $\E \{\matr{v}_k^H \matr{h}_{k} | \alpha_k \}$ and the other terms in $\matr{v}_k^H \matr{y}_d$ are the effective noise $\tilde{n}_k$. Plugging the ZFJS, i.e., $\matr{v}_k^H \matr{h}_i=\bigg\{ \begin{array}{rl}
1 , &\quad i=k \\
0 , &\quad i \neq k
\end{array}$, into $\mathcal{S}_k$ and since $\matr{n}_d$ is independent of the channels, we obtain $\mathcal{S}_k$ as \eqref{sum_se_imperfect}.

\end{document}